\newtheorem{thm}{Theorem}
\newtheorem{lem}[thm]{Lemma}
\begin{document}

%\title{BSS based Co-operative PHY Communication in C-RAN}
\title{Elastic-Net: Boosting Energy Efficiency and Resource Utilization in 5G C-RANs}
\author{{\bf Abolfazl Hajisami, Tuyen X. Tran, and Dario Pompili}\\
Department of Electrical and Computer Engineering \\Rutgers University--New Brunswick, NJ, USA\\
e-mail: \{hajisamik, tuyen.tran, pompili\}@cac.rutgers.edu}

\maketitle
\thispagestyle{empty}

\begin{abstract}
Current Distributed Radio Access Networks~(D-RANs), which are characterized by a static configuration and deployment of Base Stations~(BSs), have exposed their limitations in handling the temporal and geographical fluctuations of capacity demands. At the same time, each BS's spectrum and computing resources are only used by the active users in the cell range, causing idle BSs in some areas/times and overloaded BSs in other areas/times. Recently, Cloud Radio Access Network~(C-RAN) has been introduced as a new centralized paradigm for wireless cellular networks in which---through virtualization---the BSs are physically decoupled into Virtual Base Stations~(VBSs) and Remote Radio Heads~(RRHs). In this paper, a novel elastic framework aimed at fully exploiting the potential of C-RAN is proposed, which is able to adapt to the fluctuation in capacity demand while at the same time maximizing the energy efficiency and resource utilization. Simulation and testbed experiment results are presented to illustrate the performance gains of the proposed elastic solution against the current static deployment.
\end{abstract}
\begin{IEEEkeywords}
Cloud Radio Access Network; Virtual Base Station; Energy Efficiency; Resource Utilization.
\end{IEEEkeywords}

\section{Introduction}\label{sec:introduction}
\textbf{Motivation:}
Over the last few years, proliferation of personal mobile computing devices like tablets and smartphones along with a plethora of data-intensive mobile applications has resulted in a tremendous increase in demand for ubiquitous and high data rate wireless communications. The current practice to enhance the spectral efficiency and data rate is to increase the number of Base Stations~(BSs) and go for smaller cells so as to increase the band reuse factor. \emph{However, it is studied that increasing the BS density or the number of transmit antennas will decrease the energy efficiency due to the dynamic traffic variation}~\cite{li2013throughput}. The number of active users at different locations varies depending on the time of the day and week. This movement of mobile network load based on time of the day/week is referred to as the ``tidal effect".  In the traditional Distributed Radio Access Network~(D-RAN), \emph{each BS's spectrum and computing resources are only used by the active users in the cell range}. Hence, deploying small cells for the peak (worst case) traffic time leads to grossly under-utilized BSs in some areas/times and is highly energy inefficient, while deploying cell for the average traffic time leads to oversubscribed BSs in some other areas/times. %Due to this load imbalances, energy efficiency at BSs is only $50\%$ and the utilization rate is as low as $33\%$.

At the stage of network planning, cell size and capacity are usually determined based on the estimation of peak traffic load. However, \emph{due to the tidal effect, there are no fixed cell size and transmission power that optimize the overall power consumption of the network}. This means that the use of small cells is quite efficient in terms of power consumption as well as utilization of spectrum and computing resources when the capacity demand is high and evenly distributed in space; however, it becomes less so when the data traffic is low and/or uneven due to the static resource provisioning and fixed BS power consumptions. On the other hand, the economic impact of power consumption is particularly dire in emerging markets and the Fifth Generation~(5G) of wireless networks must be not only spectral efficient but also energy efficient (e.g., a 1000$\times$ improvement in energy efficiency is expected by 2020).  Although several recent efforts have been made to increase spectral efficiency~\cite{hajimirsadeghi2016inter} and to reduce the power consumption of existing small cell networks, limited attention has been given towards optimizing the overall network deployment. Hence, a novel design and architecture is necessary for the next generation of wireless network to overcome these challenges.

\textbf{A New Centralized Cellular Network Paradigm:}
Cloud Radio Access Network~(C-RAN)~\cite{whitepaper} is a new architecture for the next generation of wireless cellular networks that allows for a dynamic reconfiguration of spectrum/computing resources (Fig.~\ref{fig:cran_example}). C-RAN consists of three parts: 1)~Remote Radio Heads~(RRHs) plus antennae, which are located at the remote site and are controlled by Virtual Base Stations~(VBSs) housed in a centralized processing pool, 2)~the Base Band Unit~(BBU) (known as VBS pool) composed of high-speed programmable processors and real-time virtualization technology to carry out the digital processing tasks, and 3)~low-latency high-bandwidth optical fibers, which connect the RRHs to the VBS pool. The communication functionalities of the VBSs are implemented on Virtual Machines~(VMs), which are housed in one or more racks of a small cloud datacenter. This centralized characteristic along with virtualization technology and low-cost relay-like RRHs provides a higher degree of freedom to make optimized decisions, and has made C-RAN a promising candidate to be incorporated into the 5G networks~\cite{tran2016quaro, hajisami2017dynamic,hajisamicocktail,hajimirsadeghi2017joint}.

%\begin{figure}[tp]
%\centering
%  \includegraphics[width=0.4\textwidth]{fig/CRAN7.eps}\\
%  \caption{Cloud Radio Access Network (C-RAN) architecture, where the Base Stations (BSs) are physically unbundled into Virtual Base Stations (VBSs) and Remote Radio Heads (RRHs). VBSs are housed in centralized processing pools and can communicate with each other at $\mathrm{Gbps}$ speeds.}\label{fig:cran}
%\end{figure}

%The communication functionalities of the VBSs are implemented on Virtual Machines (VMs) hosted over general-purpose computing platforms, which are housed in one or more racks of a small cloud datacenter. In a centralized VBS pool, since all the information from the BSs resides in a common place, the VBSs can exchange control data at $\mathrm{Gbps}$. This centralized characteristic along with virtualization technology and low-cost relay-like RRHs provide a higher degree of freedom in order to make optimized decisions and have made C-RAN a promising candidate to be incorporated into the 5G wireless network.

\textbf{Our Contributions:}
In this paper, we focus on optimizing the power consumption and resource utilization by leveraging the full potential of the C-RAN architecture. We propose a novel elastic resource provisioning framework, called ``Elastic-Net'', to minimize the power consumption while addressing the fluctuations in per-user capacity demand. In our solution, we divide the covered region of the network into clusters based on the traffic model and, within each cluster, we dynamically adapt the active RRH density, transmission power, and size of the VM\footnote{The size of a VM is represented in terms of its processing power, memory and storage capacity, and network interface speed.} based on the traffic fluctuations so as to minimize the power consumption while maximizing the resource utilization. To minimize the power consumption in the cell sites while ensuring a certain minimum coverage and data rate, we propose to dynamically optimize and adapt the RRH density and transmission power based on the traffic demand and user density. Likewise, to minimize the power consumption in the cloud we dynamically optimize and adapt the size of the VMs while ensuring that the frame-processing deadline is met.

\textbf{Paper Outline:}
In Sect.~\ref{sec:NM}, we present the system model. %and describe the power consumption model of RRH, transport network, and VBS pool.
In Sect.~\ref{sec:elastic_provisioning}, we formulate the problem and describe our demand-aware provisioning framework. In Sect.~\ref{sec:simulation}, we validate our statements through simulation and testbed experiments. % and show the potential of our solution.
Finally, we conclude the paper in Sect.~\ref{sec:conclusion}.

\section{System Model}\label{sec:NM}
We consider a C-RAN downlink system and assume that each user is served by the nearest active RRH. The RRHs and users are distributed according to two independent Poisson Point Processes~(PPPs) in $\mathbb{R}^2$, denoted as $\Phi_r$ and $\Phi_u(t)$, respectively. Let $\lambda_{r}$ and $\lambda _u(t)$ denote the RRH density and the time-dependent user density, respectively. The set of all RRHs is denoted by $\mathcal{L}=\left\{ {1, \ldots ,L} \right\}$, where $\mathcal{A} \subseteq \mathcal{L}$ is the set of active RRHs and $\mathcal{Z} \subseteq \mathcal{L}$ is the set of inactive RRHs ($\mathcal{A} \cup \mathcal{Z} = \mathcal{L}$). Let also ${\mu _a(t)}$ $(0 \le {\mu _a}(t) \le 1)$ denote the RRH activity factor, which indicates the ratio of active RRHs to all RRHs, where $\lambda _{r}^a (t) = \mu _a (t)\lambda _r$ is the time-dependent density of active RRHs. The total bandwidth is denoted by $B$ and the bandwidth per user is given by $B_{u}(t) = {B}\frac{\lambda _{r}^a (t)}{{\lambda _u(t)}}$.

\begin{figure*}[th]
 \centering
 \resizebox{\textwidth}{!} {
 \begin{tabular}{cc}
\includegraphics[width=0.5\textwidth,height=2.5in]{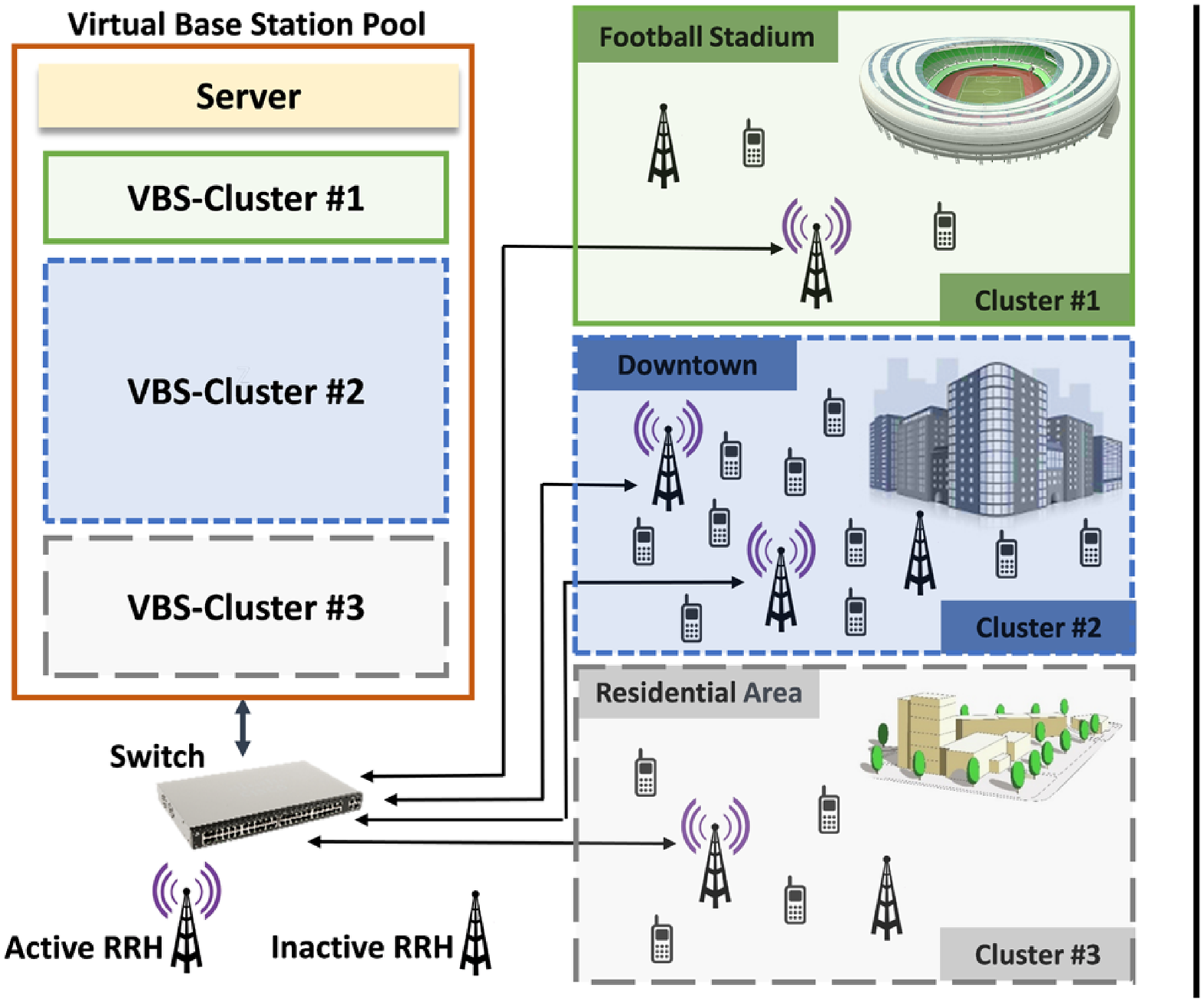} &
\hspace{-0.1cm}\includegraphics[width=0.48\textwidth,height=2.5in]{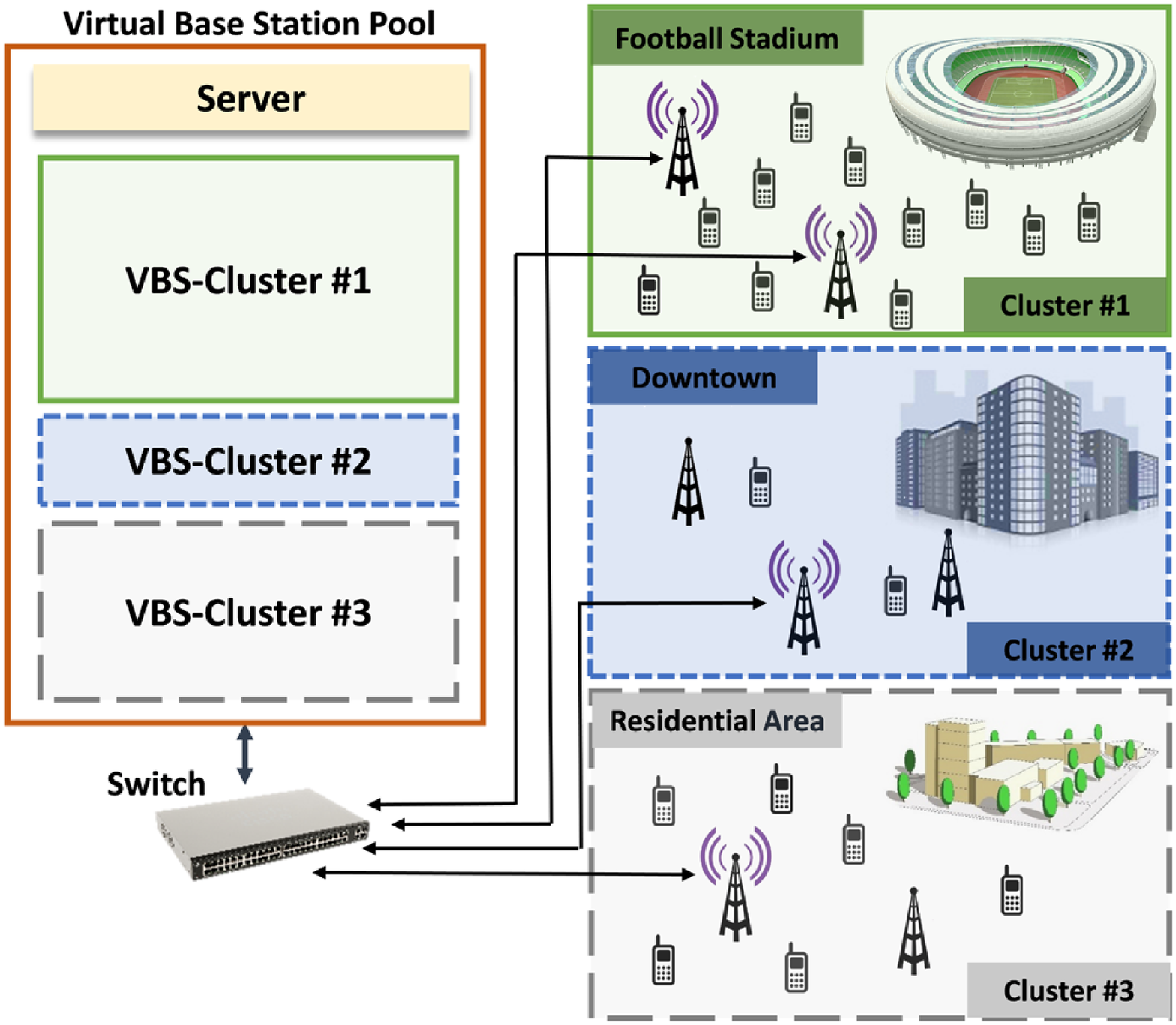} \\
\small (a) Day & \small(b) Night
\end{tabular}
}
\caption{The use of virtualization in C-RAN allows dynamic re-provisioning of spectrum and computing resources (visualized here using different sizes) to Virtual Base Stations~(VBSs) based on demand fluctuation; (a) and (b) illustrate the movement of mobile network load from the downtown office area to the residential and recreational areas over the course of a day and the corresponding changes in active RRH density and VBS size (we have used different icons for active RRH and inactive RRH). }\label{fig:cran_example}
\end{figure*}

%\subsection{RRH and Transport Network Power Consumption Model}
\textbf{RRH and Transport Network Power Consumption Model:}
Since in C-RAN the BSs are decoupled into RRHs and VBSs, we divide the network power consumption into two parts: (i)~RRH and transport network power consumption and (ii)~VBS pool power consumption. For the power consumption of a RRH, we consider a linear model as in~\cite{auer2011much},
\begin{equation}\label{RRH_power}
{P_{rrh}} = \left\{ {\begin{array}{*{20}{cc}}
{P_{rrh}^a + \frac{1}{\eta }{P}}&{{\rm{if}}\hspace{0.15cm}{P} > 0}\\
{\hspace{-1.1cm}P_{rrh}^s}&{{\rm{if}}\hspace{0.15cm}{P} = 0}
\end{array}} \right.,
\end{equation}where $P_{rrh}^a$ is the active circuit power consumption, $\eta$ is the power amplifier efficiency, $P$ is the transmission power, and $P_{rrh}^s$ is the RRH power consumption in the sleep mode. We also consider the future Passive Optical Network~(PON) to provide low-cost, high-bandwidth, low-latency connections between the RRHs and VBS pool~\cite{dhaini2014energy}. PON comprises an Optical Line Terminal~(OLT) that resides in the VBS pool and connects a set of associated Optical Network Units~(ONUs) through a single fiber. In this paper, we consider \emph{fast/cyclic sleep} mode where the ONU state alternates between the active state (when the RRH is in the active state) and the sleep state (when the RRH is in the sleep state). Hence, the power consumption of the transport network is given as in~\cite{dhaini2014energy},
\begin{equation}\label{P_transport}
{P_{tn}} = {P_{olt}} + {P_{onu}},
\end{equation}where $P_{olt}$ is the OLT power consumption in the VBS pool and $P_{onu}$ is the ONU power consumption, given as,
\begin{equation}\label{P_onu}
{P_{onu}} = \left| \mathcal{A} \right|P_{tl}^a + \left| \mathcal{Z} \right|P_{tl}^s,
\end{equation}where $P_{tl}^a$ and $P_{tl}^s$ are the consumed power by each ONU  in the active and sleep mode, respectively. Since $P_{olt}$ is consumed in the VBS pool, we consider it in the power consumption of the VBS pool. Therefor, the overall area RRH and transport network power consumption is given by,
\begin{equation}\label{P_area}
\mathcal{P}_{area} = {\lambda _{r}^a(t)}({P_{rrh}^a} + \frac{1}{\eta }{P} + {P_{tl}^a}) + {\lambda _{r}^s(t)}({P_{rrh}^s} + {P_{tl}^s}).
\end{equation}

%\subsection{VBS Pool Power Consumption Model}\label{sec:VBS_pow_con_model}
\textbf{VBS Pool Power Consumption Model:}
As discussed in~\cite{beloglazov2012energy}, compared to other system resources, the CPU consumes the main part of the power in the VBS pool; hence, in this work, we focus on minimizing its power consumption. In order to model the power consumption of the VBS pool, we introduce the notion of \emph{size of a VM}, which is represented in terms of its processing power~[CPU cycles per second]; for each VM, we consider the power model defined as,
\begin{equation}\label{VBS_Power}
P_{vm} = \Delta(t) P_{max} u(t) + \beta \Delta(t) P_{max} (1-u(t)) + P_{olt},
\end{equation}where $\Delta(t)$ is the size of the VM in terms of CPU cycles per second, $P_{max}$ is the maximum power consumed per unit VM size when the server is fully utilized, $\beta$ is the fraction of power consumed by the idle VM, and $u(t)$ is VM utilization. Note that, in our model, $\Delta(t)$ and $u(t)$ change over time due to the workload variation and hence they are functions of time.

\section{Elastic-Net: Demand-Aware Provisioning}\label{sec:elastic_provisioning}
In our solution, as depicted in Fig.~\ref{fig:cran_example}, we cluster the neighboring RRHs and their corresponding VBSs based on traffic model, and in each cluster we adapt the system parameters accordingly. We advocate \emph{demand-aware resource provisioning} where in each cluster the active RRH density, transmission power, and size of the VM are dynamically changed over time to minimize the power consumption and to meet the fluctuating traffic demand as well as network constraints. For instance, as shown in Fig.~\ref{fig:cran_example}(a)-(b), due to the higher capacity demand during day time in cluster~\#2~(a), we provision it with more active RRH and higher size of VBS-Cluster rather than during night time~(b), when we have less capacity demand. Hence, our objective is to obtain the optimal active RRH density, transmission power, and size of VM for each cluster so that the power consumption is minimized while meeting a predefined coverage probability, per-user data rate, and frame-processing time. Our optimization problem for the $i^{\rm {th}}$ cluster can be cast as,
\begin{subequations}
\label{first_opt}
\begin{align}
& \mathfrak{p}: \underset{{\mu_a,{P},\Delta}}{\text{argmin}}
& & {\mathcal{P}_{area}^i}({\mu_a (t,i),{P}(t,i)})+ P_{vm}^i(\Delta (t,i))\label{first_opt1} \\
& \text{subject to}
& & {{P_{{\mathop{\rm cov}} }} \ge \varepsilon P_{{\mathop{\rm cov}} }^{\infty}}, \label{first_opt2} \\
&&& {{R_u} \ge  {R_{\rm 0}}}, \label{first_opt3}\\
&&& {{T_{\rm dl}} \ge  {T_{\rm fr}}},
\end{align}
\end{subequations}where ${\mathcal{P}_{area}^i}({\mu_a (t,i),{P}(t,i)})$ and $P_{vm}^i(\Delta (t,i))$ are the area power consumption and VBS-Cluster power consumption of the $i^{th}$ cluster, respectively, $P_{{\rm{cov}}}^{{\infty}}$ is the coverage probability at no noise regime, $R_{\rm 0}$ is the per-user minimum data rate, $T_{\rm fr}$ is the frame-processing time, $T_{\rm dl}$ is frame deadline, and $\varepsilon$  is a positive number in $[0,1]$. $\mu_a(t,i)$, $P(t,i)$, and $\Delta (t,i)$ are also the RRH activity factor, transmission power, and size of the VM for the $i^{\rm {th}}$ cluster, respectively. Due to the temporal variation of traffic demand in each cluster, $\mu_a(t,i)$, $P(t,i)$, and $\Delta (t,i)$ are time dependent; hence, the optimal solution $[\mu_a^*(t,i),P^*(t,i),\Delta^* (t,i)]$ in general varies over time.

The density of active and inactive RRHs in the $i^{th}$ are,
\begin{subequations}
\label{active_inactive}
\begin{align}
&\lambda _r^a(t,i) = {\mu _a}(t,i){\lambda _r}(i), \label{eq:active}\\
&\lambda _r^s(t,i) = \left( {1 - {\mu _a}(t,i)} \right){\lambda _r}(i), \label{eq:inactive}
\end{align}
\end{subequations}where ${\lambda _r}(i)$ is the density of all RRHs in the $i^{th}$ cluster. By substituting~\eqref{eq:active} and \eqref{eq:inactive} into \eqref{P_area}, we can write,
\begin{equation}\label{P_I}
{\mathcal{P}_i}\left( {P(t,i),{\mu _a}(t,i)} \right) = {\lambda _r}(i)\left( {{\mu _a}(t,i){Q_1}(t,i) + {Q_2}} \right),
\end{equation}where
\begin{subequations}
\label{q1_a2}
\begin{align}
&{Q_1}(t,i) = P_{rrh}^a + \frac{1}{\eta }P(t,i) + P_{tl}^a - P_{rrh}^s - P_{tl}^s, \label{eq:q1}\\
&{Q_2} = P_{rrh}^s + P_{tl}^s. \label{eq:q2}
\end{align}
\end{subequations}

From~\eqref{P_I} and~\eqref{eq:q1}, we see that the objective function is non-convex because of the multiplication term of $\mu _a(t,i)$ and $P(t,i)$. To minimize the objective function, as in~\cite{lin2016multi}, we can use the coordinate descent algorithm and minimize $\mu_a(t,i)$, $P(t,i)$, and $\Delta (t,i)$ independently.

\subsection{Optimal Active RRH Density}
\begin{lem}\label{lemma_0}
The minimum RRH activity factor for which the constraint ${{R_u} \ge  {R_{\rm 0}}}$ is met is given by,
\begin{equation}\label{mu_a0}
\mu _a^*(t,i) = \frac{{{R_{\rm 0}}  \lambda _u \left( {t,i} \right)}}{{B\lambda _r \left( i \right)\left[ {\log _2 (1 + \gamma ) + \gamma ^{\frac{2}{\alpha }} \mathcal{A}\left( {\alpha ,\gamma } \right)} \right]}},
\end{equation}
where
\begin{equation}\label{L0}
\mathcal{A}\left( {\alpha ,\gamma } \right) = \int_\gamma ^\infty  {\frac{{x^{ - 2/\alpha } }}{{1 + x}}dx}.
\end{equation}
\end{lem}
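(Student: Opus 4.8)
The plan is to write the typical user's rate as the product of its per-cell bandwidth share and an average spectral efficiency, evaluate the latter by stochastic geometry, observe that it does not depend on $\mu_a$, and then invert a monotone inequality. Since the bandwidth inside a cell is shared equally, the typical user's rate is $R_u(t,i)=B_u(t,i)\,\bar{C}(\alpha,\gamma)$, with $B_u(t,i)=B\,\lambda_r^a(t,i)/\lambda_u(t,i)=B\,\mu_a(t,i)\,\lambda_r(i)/\lambda_u(t,i)$ and $\bar{C}(\alpha,\gamma)$ the mean number of bits/s/Hz delivered to a user that is in coverage, i.e.\ whose $\mathrm{SIR}$ exceeds the threshold $\gamma$ appearing in~\eqref{first_opt2}.

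To compute $\bar{C}(\alpha,\gamma)$ I would use the layer-cake identity $\mathbb{E}[X;A]=\int_{0}^{\infty}\mathbb{P}(X>t,\,A)\,dt$ with $X=\ln(1+\mathrm{SIR})$ and $A=\{\mathrm{SIR}\ge\gamma\}$. Writing $p_c(x)=\mathbb{P}(\mathrm{SIR}>x)$ for the coverage probability at threshold $x$ and splitting the integral at $t=\ln(1+\gamma)$ (below which the event $A$ already forces $X>t$, contributing $\ln(1+\gamma)\,p_c(\gamma)$), the change of variable $x=e^{t}-1$ on the upper part yields
\begin{equation}
\mathbb{E}\!\left[\ln(1+\mathrm{SIR});\ \mathrm{SIR}\ge\gamma\right]\;=\;\ln(1+\gamma)\,p_c(\gamma)\;+\;\int_{\gamma}^{\infty}\frac{p_c(x)}{1+x}\,dx.
\end{equation}
Dividing by $p_c(\gamma)$ to obtain the rate conditioned on coverage, expressing it in bits, and inserting the standard closed form of $p_c$ for a Rayleigh-faded PPP network with path-loss exponent $\alpha$ and nearest-active-RRH association --- the same interference-limited expression that defines $P_{\mathrm{cov}}^{\infty}$ --- the ratio $p_c(x)/p_c(\gamma)$ collapses to $(\gamma/x)^{2/\alpha}$ in the interference-limited regime, so the remaining integral reduces to $\gamma^{2/\alpha}\int_{\gamma}^{\infty}\frac{x^{-2/\alpha}}{1+x}\,dx=\gamma^{2/\alpha}\,\mathcal{A}(\alpha,\gamma)$ with $\mathcal{A}$ as in~\eqref{L0} (convergence requires $\alpha>2$). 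This gives $\bar{C}(\alpha,\gamma)=\log_2(1+\gamma)+\gamma^{2/\alpha}\mathcal{A}(\alpha,\gamma)$.

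The reason the statement is this clean is the \emph{scale invariance} of the $\mathrm{SIR}$ in a stationary PPP under singular path loss: $p_c(\cdot)$, and therefore $\bar{C}(\alpha,\gamma)$, is independent of the active-RRH density and hence of $\mu_a$. Consequently $R_u(t,i)=B\,\mu_a(t,i)\,\lambda_r(i)\,\bar{C}(\alpha,\gamma)/\lambda_u(t,i)$ is linear and strictly increasing in $\mu_a(t,i)$, so $R_u\ge R_0$ holds exactly when $\mu_a(t,i)\ge R_0\,\lambda_u(t,i)/\!\left(B\,\lambda_r(i)\,\bar{C}(\alpha,\gamma)\right)$; the smallest feasible value is this bound, which with $\bar{C}(\alpha,\gamma)=\log_2(1+\gamma)+\gamma^{2/\alpha}\mathcal{A}(\alpha,\gamma)$ is exactly~\eqref{mu_a0} (provided the resulting value does not exceed $1$; otherwise the target $R_0$ is infeasible in that cluster).

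I expect the main obstacle to be the stochastic-geometry evaluation of $\bar{C}(\alpha,\gamma)$: pinning down the coverage-probability expression and its interference-limited $x^{-2/\alpha}$ scaling, and handling the conditioning on $\{\mathrm{SIR}\ge\gamma\}$ and the logarithm base carefully, so that the integral term reduces precisely to $\gamma^{2/\alpha}\mathcal{A}(\alpha,\gamma)$. Once $\bar{C}$ is available, invoking density invariance and inverting the monotone relation are routine.
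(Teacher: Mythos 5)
Your proposal is correct and follows essentially the same route as the paper: the paper simply quotes the conditional spectral efficiency $\tau(\alpha,\gamma)=\log_2(1+\gamma)+\gamma^{2/\alpha}\mathcal{A}(\alpha,\gamma)$ of a covered user from~\cite{dhillon2012modeling}, multiplies it by the per-user bandwidth share $B\mu_a(t,i)\lambda_r(i)/\lambda_u(t,i)$, and inverts the resulting linear inequality in $\mu_a$, exactly as you do. The only difference is that you re-derive the cited spectral-efficiency formula from scratch (layer-cake identity plus the interference-limited $x^{-2/\alpha}$ scaling of the coverage probability, which is exact only for thresholds $\gamma\ge 1$), a result the paper takes as given.
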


\begin{proof}
The spectral efficiency achievable by a randomly chosen user when it is in coverage is given as in~\cite{dhillon2012modeling},
\begin{equation}\label{K_tier}
\tau (\alpha ,\gamma ) = \log _2 \left( {1 + \gamma } \right) + \gamma ^{\frac{2}{\alpha }} \mathcal{A}\left( {\alpha ,\gamma } \right).
\end{equation}Hence, the per-user data rate in the $i^{\rm th}$ cluster and at time instant $t$ can be written as,
\begin{equation}\label{data_rate}
R_u \left( {t,i} \right) = \frac{{B\mu _a \left( {t,i} \right)\lambda _r \left( i \right)}}{{\lambda _u \left( {t,i} \right)}}\left[ {\log _2 \left( {1 + \gamma } \right) + \gamma ^{\frac{2}{\alpha }} \mathcal{A}\left( {\alpha ,\gamma } \right)} \right].
\end{equation}So, considering constraint~\eqref{first_opt3}, we can write,
\begin{equation}\label{mu_proof}
\mu _a \left( {t,i} \right) \ge \frac{{{R_{\rm 0}}  \lambda _u \left(t, i \right)}}{{B\lambda _r \left( i \right)\left[ {\log _2 (1 + \gamma ) + \gamma ^{\frac{2}{\alpha }} \mathcal{A}\left( {\alpha ,\gamma } \right)} \right]}},
\end{equation}which establishes the minimum RRH activity factor as a function of $\lambda_u(t,i)$ to satisfy the per-user data-rate constraint.
\end{proof}

\subsection{Optimal Transmission Power}
Given a fixed active RRH density, we can minimize the transmit power of the active RRHs so as to achieve a certain coverage and outage probability. Since in our solution the active RRH density of different clusters changes over time based on the traffic demand, we need to also dynamically optimize the transmit power accordingly. This can further decrease the power consumption of the system. For instance, when the density of active RRHs becomes higher, each RRH has only a small coverage area and users can be in coverage even with a lower transmission power.

\begin{lem}\label{lemma_1}
The minimum transmission power for which the constraint ${{P_{{\mathop{\rm cov}} }} \ge \varepsilon P_{{\mathop{\rm cov}} }^{\infty}}$ is met is given by,
\begin{equation}\label{P_1}
P_c^*(t,i) = \frac{{{L_1}}}{{{{\left[ {{\mu _a}(t,i){\lambda _r}(i)} \right]}^{\frac{\alpha }{2}}}}},
\end{equation}
where
\begin{subequations}
\label{L1}
\begin{align}
&{L_1} = \frac{{\gamma {\sigma ^2}\Gamma \left( {\frac{\alpha }{2} + 1} \right)}}{{{\pi ^{\frac{\alpha }{2}}}{{\left[ {1 + \Upsilon (\gamma ,\alpha )} \right]}^{\frac{\alpha }{2}}}\left( {1 - \varepsilon } \right)}}, \label{eq:L1_1}\\
&\Upsilon (\gamma ,\alpha ) = {\gamma ^{\frac{2}{\alpha }}}\int_0^\infty  {\frac{1}{{1 + {z^{\frac{\alpha }{2}}}}}} dz. \label{eq:L1_2}
\end{align}
\end{subequations}
\end{lem}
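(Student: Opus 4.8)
The plan is to start from the SINR coverage probability of a single-tier Poisson network with nearest-active-RRH association and Rayleigh fading, which is exactly the model of~\cite{dhillon2012modeling}. Writing $\lambda_r^a=\mu_a(t,i)\lambda_r(i)$ for the active-RRH density and changing variables to $v=r^2$ (the squared distance to the serving RRH), the coverage probability can be put in the form
\begin{equation*}
P_{\mathrm{cov}} = \pi\lambda_r^a\int_0^\infty \exp\!\left(-\pi\lambda_r^a\left(1+\Upsilon(\gamma,\alpha)\right)v\right)\exp\!\left(-\frac{\gamma\sigma^2}{P}\,v^{\alpha/2}\right)dv,
\end{equation*}
where the first exponential combines the serving-distance density with the Laplace transform of the interference (this is where the integral $\Upsilon(\gamma,\alpha)$ of~\eqref{eq:L1_2} enters) and the second exponential is the thermal-noise term. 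Taking $\sigma^2/P\to 0$ (the interference-limited, or ``no-noise'', regime) removes the second factor and gives $P_{\mathrm{cov}}^{\infty}=\pi\lambda_r^a\int_0^\infty e^{-\pi\lambda_r^a(1+\Upsilon)v}\,dv = 1/(1+\Upsilon(\gamma,\alpha))$, i.e.\ the reference level appearing on the right-hand side of~\eqref{first_opt2}.

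Next I would linearize the noise factor, which is accurate in the high-SNR regime of interest: $\exp(-\gamma\sigma^2 v^{\alpha/2}/P)\approx 1-\gamma\sigma^2 v^{\alpha/2}/P$. Substituting and splitting the integral, the leading term reproduces $P_{\mathrm{cov}}^{\infty}$ exactly, while the correction term is the elementary Gamma integral $\int_0^\infty v^{\alpha/2}e^{-av}\,dv=\Gamma(\alpha/2+1)/a^{\,\alpha/2+1}$ evaluated at $a=\pi\lambda_r^a(1+\Upsilon)$. Collecting terms yields
\begin{equation*}
P_{\mathrm{cov}} \approx P_{\mathrm{cov}}^{\infty}\left(1-\frac{\gamma\sigma^2\,\Gamma\!\left(\frac{\alpha}{2}+1\right)}{P\,\pi^{\alpha/2}\left[\mu_a(t,i)\lambda_r(i)\right]^{\alpha/2}\left[1+\Upsilon(\gamma,\alpha)\right]^{\alpha/2}}\right).
\end{equation*}

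Finally, $P_{\mathrm{cov}}$ is monotonically increasing in $P$ (the bracketed noise penalty shrinks as $P$ grows, and pointwise so does the integrand above), so the constraint $P_{\mathrm{cov}}\ge\varepsilon P_{\mathrm{cov}}^{\infty}$ holds precisely for $P$ at or above the value that makes the bracket equal to $\varepsilon$. Setting the bracket to $\varepsilon$ and solving for $P$ gives $P_c^*(t,i)=L_1/\left[\mu_a(t,i)\lambda_r(i)\right]^{\alpha/2}$ with $L_1$ as in~\eqref{eq:L1_1}, which is the claimed minimum transmission power. The part I expect to be delicate is not a computation but two modeling points: (i)~pinning down the coverage integral — in particular the common approximation under which the interference contribution collapses to the factor $1+\Upsilon(\gamma,\alpha)$ with $\Upsilon$ the integral over $(0,\infty)$ in~\eqref{eq:L1_2} (which also silently requires $\alpha>2$ for convergence); and (ii)~justifying the first-order truncation of the noise exponential, i.e.\ arguing that the relevant operating point is high-SNR so that the resulting closed form for $P_c^*$ is tight. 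Once those are granted, the monotonicity step that upgrades the equality into the stated ``minimum'' is routine.
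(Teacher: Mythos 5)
Your proposal is correct and follows essentially the same route as the paper: both start from the stochastic-geometry coverage probability with the interference contribution collapsed into the factor $1+\Upsilon(\gamma,\alpha)$, linearize the noise exponential in the low-noise regime to obtain $P_{\mathrm{cov}}\approx P_{\mathrm{cov}}^{\infty}\bigl(1-\gamma\sigma^2\Gamma(\tfrac{\alpha}{2}+1)/\{P[\pi\mu_a\lambda_r(1+\Upsilon)]^{\alpha/2}\}\bigr)$, and then solve the constraint $P_{\mathrm{cov}}\ge\varepsilon P_{\mathrm{cov}}^{\infty}$ for $P$. Your explicit evaluation of the Gamma integral and the monotonicity remark merely fill in steps the paper leaves implicit.
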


\begin{proof}
The coverage probability in the $i^{\rm th}$ cluster is given as~\cite{andrews2011tractable},
\begin{multline}\label{P_cov_p}
P_{{\mathop{\rm cov}} } \left( {\alpha ,\gamma ,\mu _a } \right) = \pi \mu _a \left( {t,i} \right)\lambda _r \left( i \right) \\ \times \int_0^\infty  {e^{ - \pi \mu _a \left( {t,i} \right)\lambda _r \left( i \right)\left( {1 + \Upsilon \left( {\alpha ,\gamma } \right)} \right) - \gamma \sigma ^2 v^{\alpha /2} P^{ - 1} } dv}.
\end{multline}Now, by using the substitution $\gamma \sigma ^2 P^{ - 1}  \to s$ in~\eqref{P_cov_p} and the approximation $e^{ - sv^{\alpha /2} }  \approx \left( {1 - sv^{\alpha /2} } \right)$ (in the case of low-noise regimes, i.e., $\sigma_n^2  \to 0$), we can write,
\begin{multline}\label{P_cov_apr}
P_{{\mathop{\rm cov}} } \left( {\alpha ,\gamma ,\mu _a } \right) \approx \\ P^\infty  \left( {1 - \frac{{\gamma \sigma ^2 \Gamma \left( {\frac{\alpha }{2} + 1} \right)}}{{P\left[ {\pi \mu _a \left( {t,i} \right)\lambda _r \left( i \right)\left( {1 + \Upsilon \left( {\alpha ,\gamma } \right)} \right)} \right]^{\frac{\alpha }{2}} }}} \right),
\end{multline}where $P^\infty$ is the coverage probability without noise~\cite{andrews2011tractable}, i.e.,
\begin{equation}\label{P_infinity}
P^\infty   = \left( {1 + \Upsilon \left( {\alpha ,\gamma } \right)} \right)^{ - 1},
\end{equation}and $\Gamma \left( x \right) = \int_0^\infty  {t^{x - 1} e^{ - t} dt}$ is the standard gamma function. The minimum transmission power $P_c^*(t,i)$ that satisfies the coverage constraints is obtained by combining~\eqref{P_cov_apr} and \eqref{first_opt2}.
\end{proof}

\subsection{Optimal Size of VM}\label{VBS_size}
We recast the power consumption of the $i^{th}$ VBS-Cluster,
\begin{equation}\label{VBS_Power_recast}
P_{vm}^i = \Delta \left( {t,i} \right){P_{\max }}u\left( {t,i} \right)(1 - \beta ) + \beta \Delta \left( {t,i} \right){P_{\max }}+P_{olt},
\end{equation}where, for a given workload, $u(t,i)$ is inversely proportional to $\Delta(t,i)$~\cite{trivedi1980optimal}. So, to minimize the power consumption of the VM, we need to minimize the size of the VM (CPU cores) such that the network requirements are met. The workload in the VBS-Cluster depends on the LTE MCS index, number of PRBs, and the channel bandwidth. Moreover, according to~\cite{chanclou2013optical}, the Round Trip Time~(RTT) between RRH and VBS pool cannot exceed  $400~\rm{\mu s}$. Since the total delay budget in LTE is considered as $3~\rm{ms}$, this leaves the VBS-Cluster with only about $2.6~\rm{ms}$ for signal processing. For this matter, we consider a modified model of the processing time presented in~\cite{alyafawi2015critical}, which is given by,
\begin{equation}\label{T_frame}
T_{{\rm{fr}}}  = \frac{{M\upsilon }}{\Delta (t,i) } = \frac{{M\upsilon }}{{N_c(t,i)\omega }},
\end{equation}where $T_{{\rm{fr}}}$ is the processing time and is measured in $\rm{\mu s}$, $M$ is the number of PRBs, $\upsilon$ is a MSC-dependent constant, $N_c(t,i)$ is the number of dedicated CPU cores to the $i^{th}$ VBS-Cluster, and $\omega$ is the CPU speed measured in $\rm{GHz}$. Hence, the minimum number of required CPU cores to meet the frame deadline is given by,
\begin{equation}\label{N_c}
N_c^*(t,i)  = \left\lceil {\frac{{M\upsilon }}{{T_{{\rm{dl}}} \omega }}} \right\rceil,
\end{equation}where $T_{{\rm{dl}}}$ is the frame deadline.

\begin{table}[ht!]
  \caption{Simulation Parameters.}\label{tab:assumptions}
\centering
\resizebox{8.5cm}{!} {
\begin{tabular}{|c|c|} \hline \hline
\multicolumn{1}{c|}{\textbf{Parameters}} & \multicolumn{1}{c}{\textbf{Mode/Value}}  \\ \hline \hline
\multicolumn{1}{c|}{\shortstack{\\ Cellular Layout}} & \multicolumn{1}{c}{Homogeneous Poisson Point Process}  \\ \hline
\multicolumn{1}{c|}{\shortstack{\\ Channel Model}} & \multicolumn{1}{c}{Path Loss and Shadowing}  \\ \hline
\multicolumn{1}{c|}{\shortstack{\\ Channel Bandwidth}} & \multicolumn{1}{c}{$20~\rm{MHz}$}  \\ \hline
\multicolumn{1}{c|}{\shortstack{\\ Number of Antennas $(N_{TX},N_{RX})$}} & \multicolumn{1}{c}{$(1,1)$}  \\ \hline
\multicolumn{1}{c|}{\shortstack{\\ OLT power consumption ($P_{olt}$)}} & \multicolumn{1}{c}{$20~\rm{W}$}  \\ \hline
\multicolumn{1}{c|}{\shortstack{\\ ONU Power Consumption in Active Mode ($P_{tl}^{a}$)}} & \multicolumn{1}{c}{$4~\rm{W}$}  \\ \hline
\multicolumn{1}{c|}{\shortstack{\\ ONU Power Consumption in Sleep Mode ($P_{tl}^{s}$)}} & \multicolumn{1}{c}{$0.5~\rm{W}$}  \\ \hline
\multicolumn{1}{c|}{\shortstack{\\ RRH Circuit Power Consumption in Active Mode ($P_{rrh}^{a}$)}} & \multicolumn{1}{c}{$12.4~\rm{W}$}  \\ \hline
\multicolumn{1}{c|}{\shortstack{\\ RRH power consumption in sleep mode ($P_{rrh}^{s}$)}} & \multicolumn{1}{c}{$3.5~\rm{W}$}  \\ \hline
\multicolumn{1}{c|}{\shortstack{\\ Maximum power consumed per each CPU core ($P_{max}$)}} & \multicolumn{1}{c}{$72~\rm{W}$}  \\ \hline
\multicolumn{1}{c|}{\shortstack{\\ Power Amplifier Efficiency ($\eta$)}} & \multicolumn{1}{c}{$0.32$}  \\ \hline
\multicolumn{1}{c|}{\shortstack{\\ MSC Dependent Constant ($\alpha$)}} & \multicolumn{1}{c}{$117.4$}  \\ \hline
\multicolumn{1}{c|}{\shortstack{\\ Fraction of Power Consumed by Idle VBS ($\beta$)}} & \multicolumn{1}{c}{$0.7$}  \\ \hline
\multicolumn{1}{c|}{\shortstack{\\ Fraction of Minimum Coverage Probability ($\varepsilon$)}} & \multicolumn{1}{c}{$0.75$}  \\ \hline
\multicolumn{1}{c|}{\shortstack{\\ Minimum Data Rate ($R_0$)}} & \multicolumn{1}{c}{$200~\rm{Kbps}$}  \\ \hline
\end{tabular}
}
\end{table}

\section{Performance Evaluation}\label{sec:simulation}
In this section, we provide a range of simulations and real-time emulations to evaluate the performance of our solution. In the simulations, we consider a cellular network where the RRHs and the users are distributed according to two independent homogeneous PPPs. Table~\ref{tab:assumptions} lists the stimulation parameters used during our experiments.
\begin{figure*}[t]
 \centering
 \resizebox{\linewidth}{!}{
 \begin{tabular}{cccc}
\includegraphics[width=0.35\textwidth,height=2in]{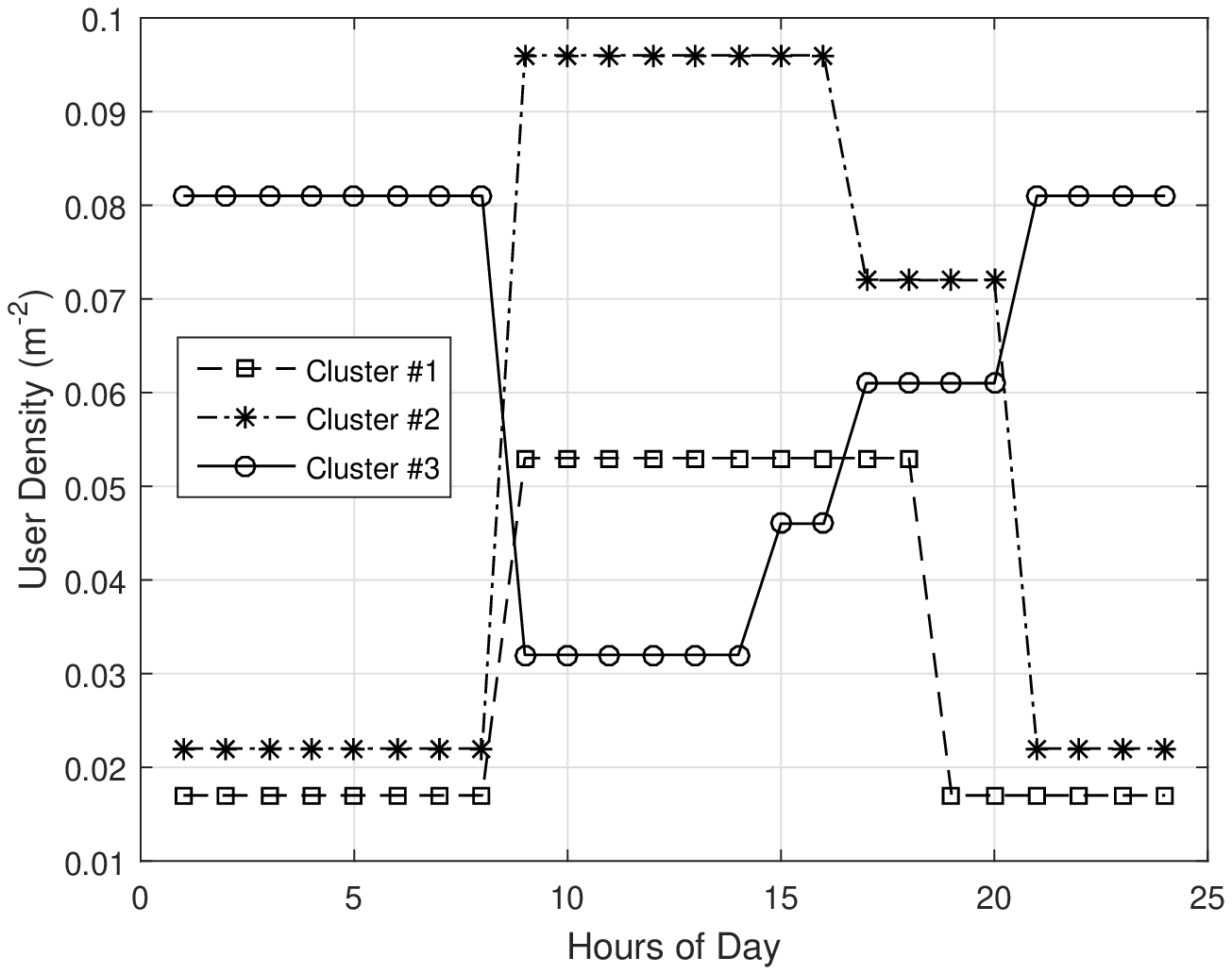} &
\includegraphics[width=0.35\textwidth,height=2in]{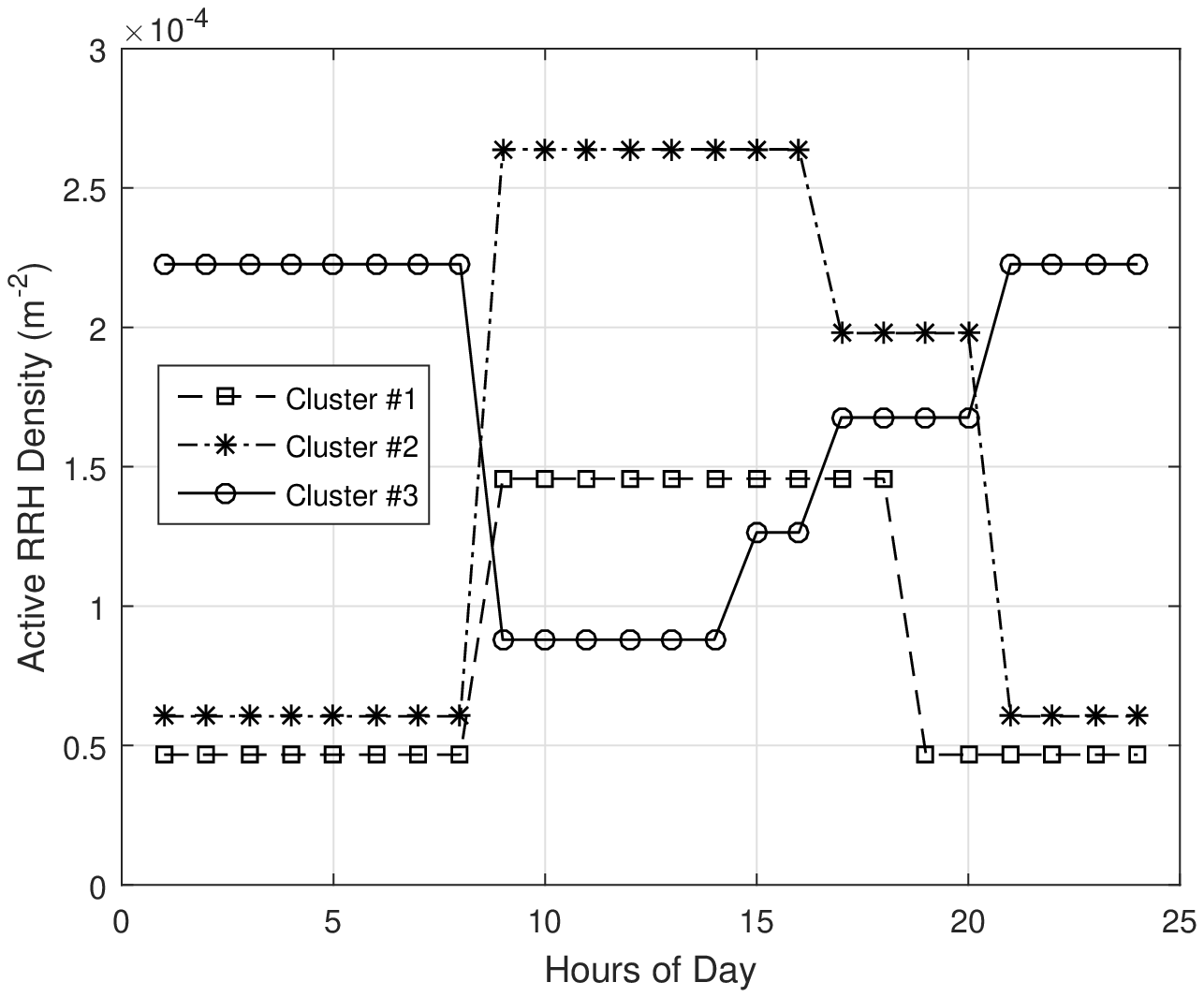} &
\includegraphics[width=0.35\textwidth,height=2in]{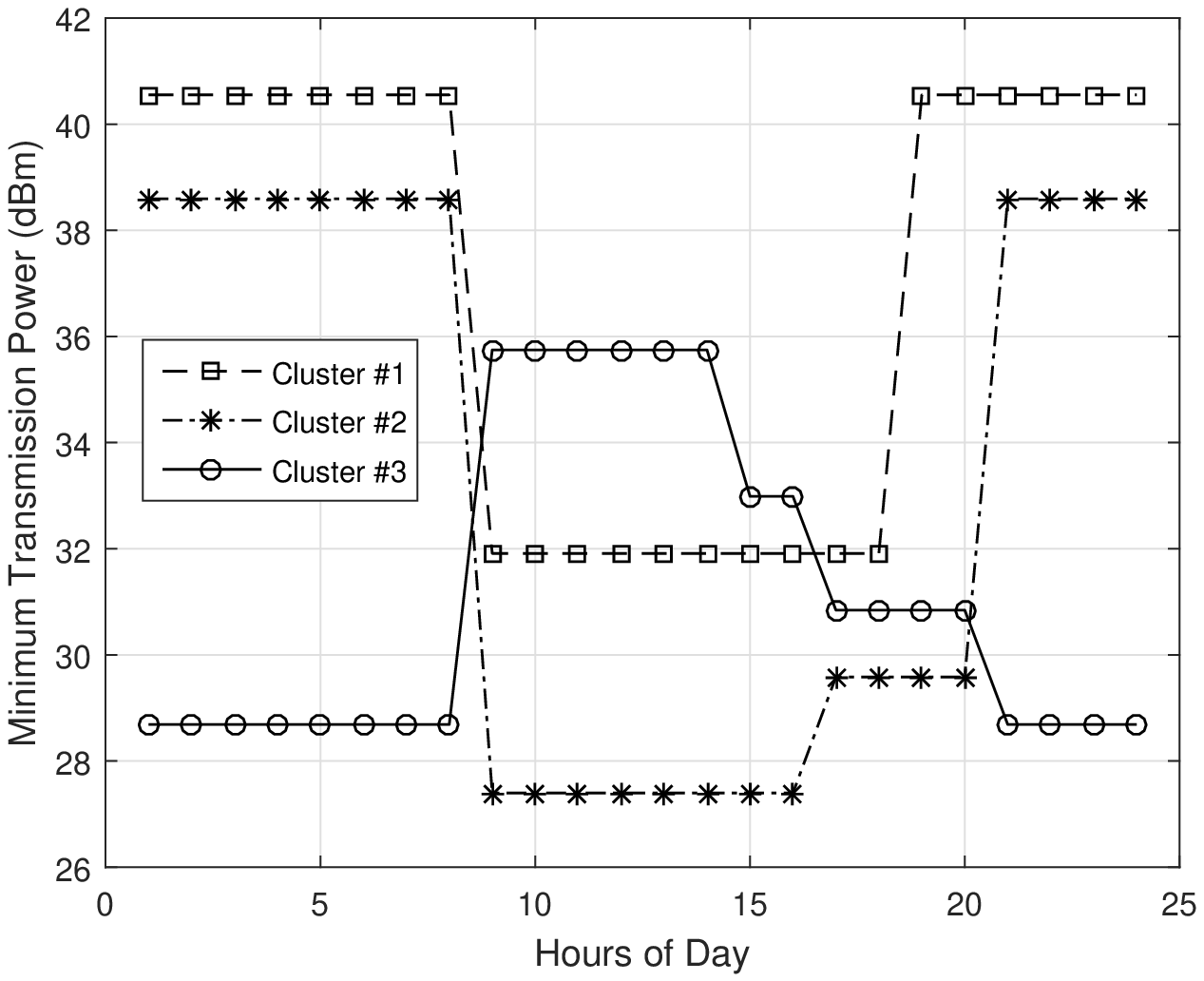} &
\includegraphics[width=0.35\textwidth,height=2in]{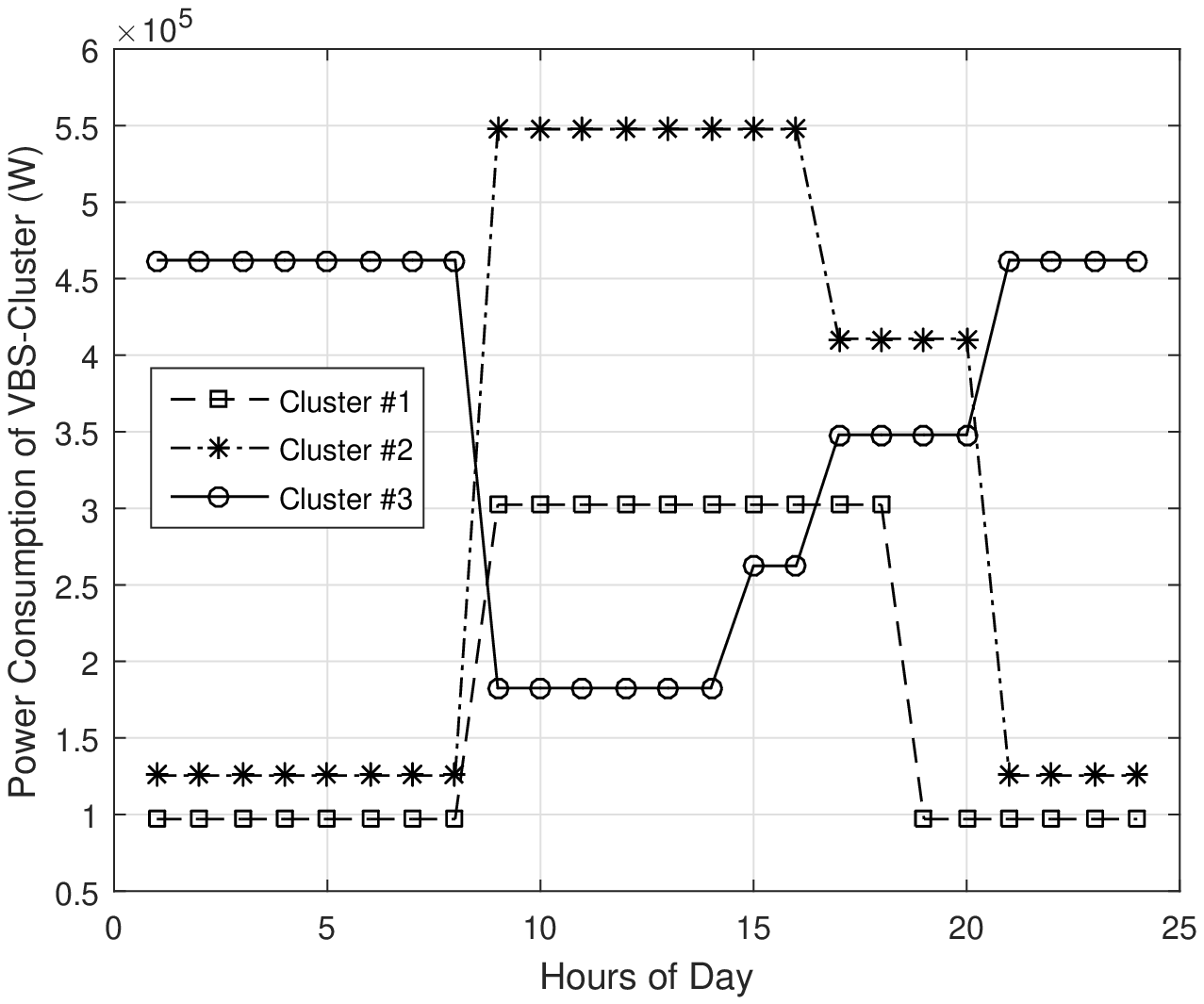} \\
\small (a) & \small(b) & \small(c) & \small(d)
\end{tabular}}
\caption{(a) Traffic fluctuation on a typical operational day for three different clusters; (b) Fluctuation of active RRH density to meet the user fluctuation for different times of the day; (c) Fluctuation of minimum transmission power to reduce the power consumption while guaranteeing a predefined Quality of Service~(QoS); (d) Power consumption of different VBS-Clusters (CPU speed = $3.3~\rm{GHz}$).}\label{fig:sim3}
\end{figure*}
%
%In Fig.~\ref{fig:sim2}(c), the number of required CPU cores is investigated for different number of PRBs. As we mentioned in Sect.~\ref{VBS_size}, resource provisioning for VBS-Clusters must be done such that the frame-processing time is less than the frame deadline. In order to minimize the power consumption and maximize the resource utilization we need to provision the VBS-Clusters with minimum CPU cores such that the frame-processing time is less than the frame deadline. With increasing the density of active RRHs and users more PRBs will be required and, consequently, more computing resources are needed. As shown in Fig.~\ref{fig:sim2}(c), for $T_{dl} = 3 ms$ we need 5 CPU core of speed 1.5 GHz or 3 CPU cores of speed 3.3 GHz to process 180 PRBs.
In order to show the performance of our dynamic solution, we simulate the traffic fluctuation on a typical operational day and show how \emph{Elastic-Net} dynamically adapts the RRH density, transmission power, and size of VBS-Clusters to minimize the power consumption while at the same time meeting the network constraints. As shown in Fig.~\ref{fig:cran_example}, we consider a network with $3$ clusters where each cluster covers an area of $25$~\rm{km}$^2$ and has its own traffic and user density fluctuation.

As shown in Fig.~\ref{fig:sim3}(a), clusters~\#1 and \#2 (corresponding to the downtown and entertainment areas, respectively) have lower user density on early morning and late night, while cluster~\#3 (residential area) has higher user density at those times. Figure~\ref{fig:sim3}(b) illustrates the minimum active RRH density adaptation required to serve the corresponding user density fluctuation in different clusters. As expected, the RRH density fluctuation corresponds to the user density fluctuation. This means that for the high traffic demand times we need a higher number of active RRHs and smaller cells. The time varying transmission power for different clusters is shown in  Fig.~\ref{fig:sim3}(c). It is clear that for the peak traffic times we need a lower transmission power than in low traffic times. This is because in the peak traffic times we have higher active RRH density and, consequently, the coverage area of each RRH becomes smaller. Figure~\ref{fig:sim3}(d) also shows the power consumption of VBS-Clusters in the VBS pool.

In Figs.~\ref{fig:sim4}(a-c), we compare the traditional static provisioning against Elastic-Net. As shown in~(a), (b), and (c), depending on the traffic fluctuation there is a noticeable difference between the power consumption of Elastic-Net and Static-Net. For instance, in cluster~\#1 and for low traffic times (7~PM--8~AM) we have $48.6\%$ decrease in power consumption by Elastic-Net. However, for the peak traffic times (8~AM--7~PM), we have only $7.4\%$ decrease in power consumption. This confirms our statements in Sect.~\ref{sec:introduction} that a small-cell deployment is efficient when the capacity demand or user density is high, while it becomes less so when the traffic demand is low.

\begin{figure*}[t]
 \centering
 \resizebox{\linewidth}{!}{
 \begin{tabular}{ccc}
\includegraphics[width=0.37\textwidth, height=1.9in]{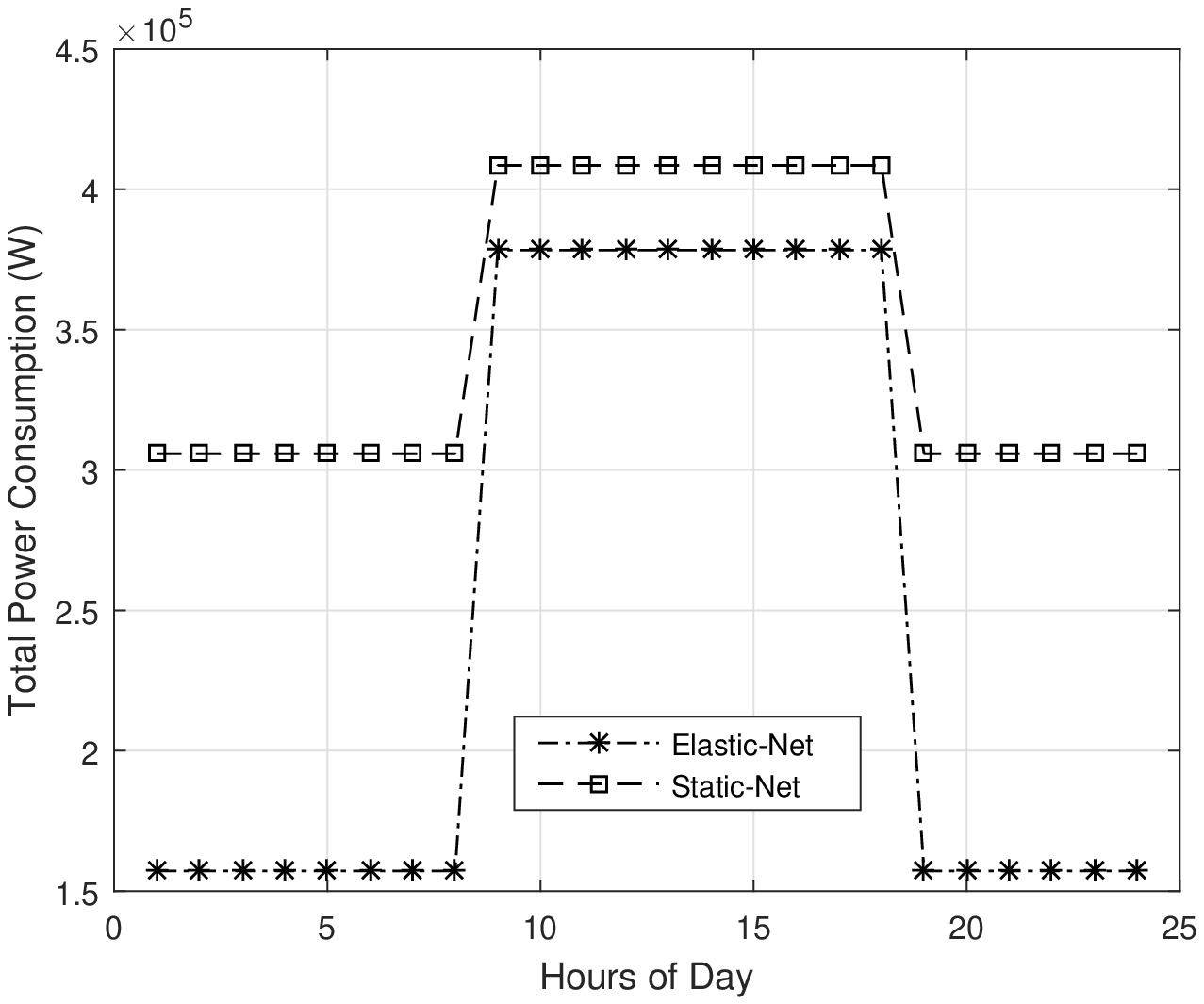} &
\includegraphics[width=0.37\textwidth, height=1.9in]{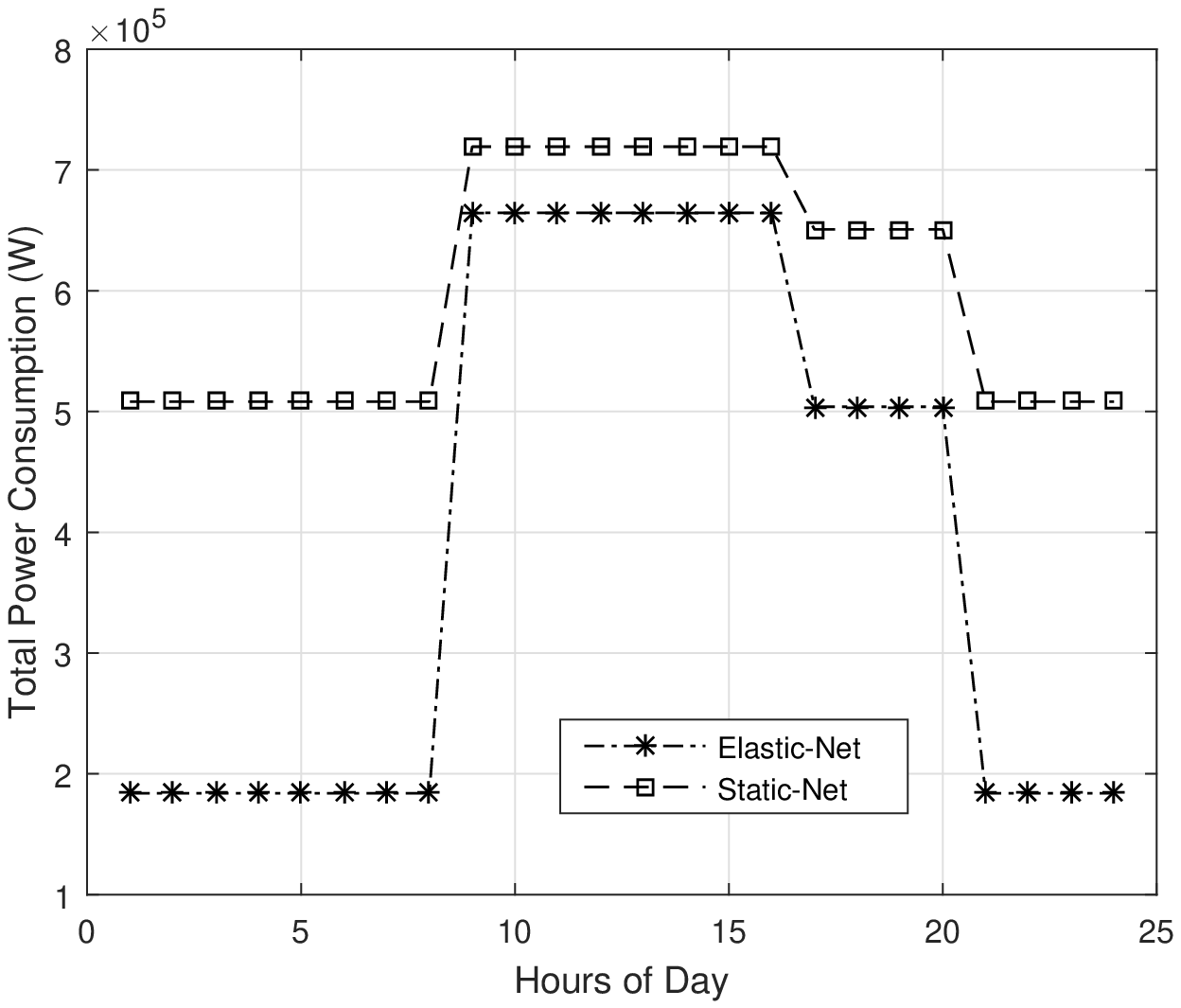} &
\includegraphics[width=0.37\textwidth, height=1.9in]{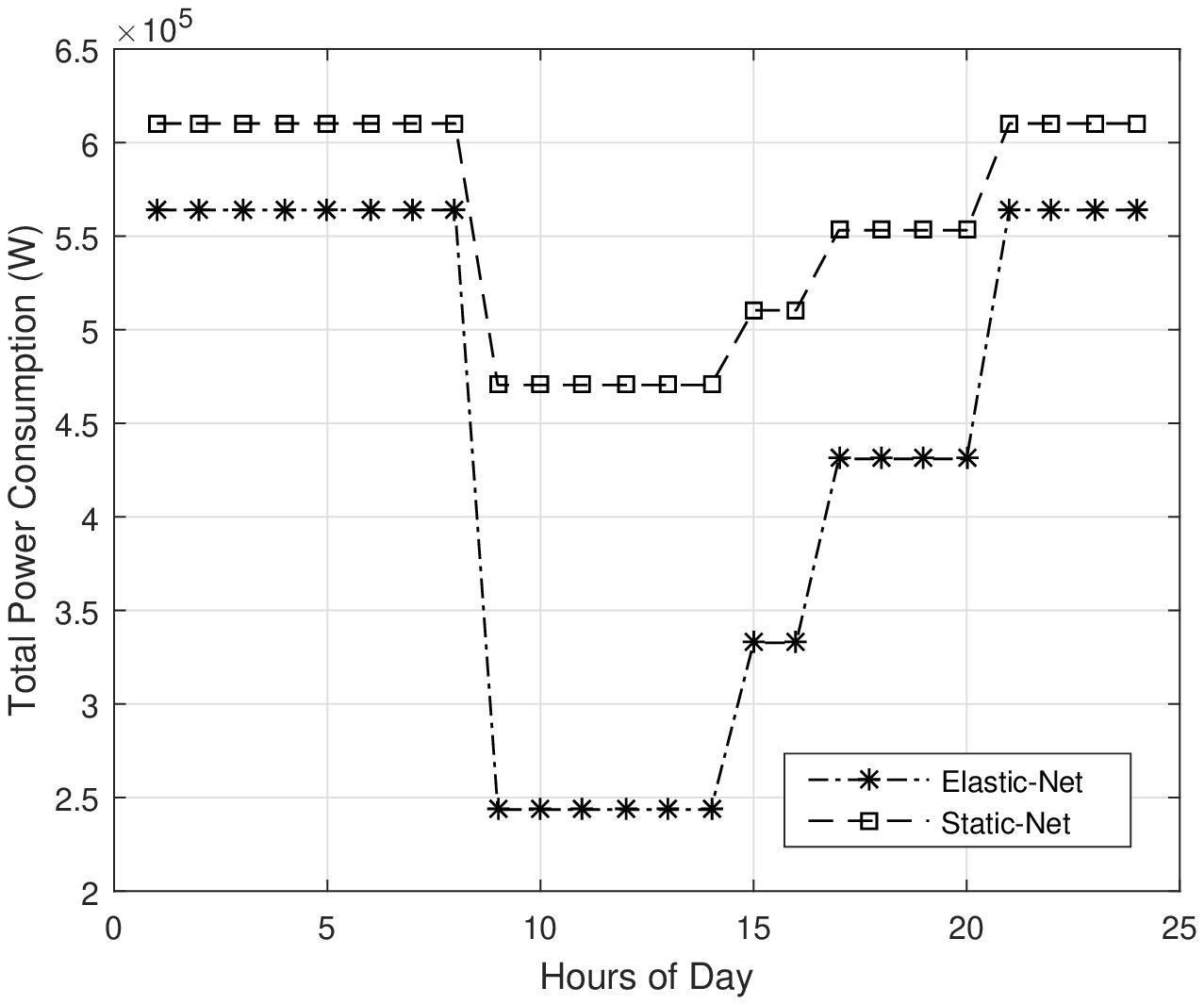} \\
\small (a) & \small(b) & \small(c)
\end{tabular}}
\caption{Comparison of power consumption between Elastic-Net~(in C-RAN) and Static-Net~(in D-RAN) for (a) Cluster~\#1, (b) Cluster~\#2, and (c) Cluster~\#3.}\label{fig:sim4}
\end{figure*}

\balance

\section{Conclusion}\label{sec:conclusion}
In the context of C-RAN---a new centralized paradigm for wireless cellular networks in which the Base Stations~(BSs) are physically decoupled into Virtual Base Stations~(VBSs) and Remote Radio Heads~(RRHs)---we proposed a novel demand-aware reconfigurable solution, named~\emph{Elastic-Net}, to minimize the network power consumption and to adapt to the fluctuations in per-user capacity demand. We divided the covered region of the cellular network into clusters based on active traffic, and dynamically adjusted RRH density, transmission power, and size of the Virtual Machine~(VM) holding the VBS so that the network power consumption is minimized and the network constraints are met. Simulation results corroborated our analysis and confirmed the benefits of our solution.

\textbf{Acknowledgment:} This work was supported in part by the National Science Foundation Grant No.~CNS-1319945.

\bibliographystyle{IEEEtran}\small
\bibliography{infocom}

\end{document}